\def\E{\mathbb{E}}
\def\P{\mathbb{P}}
\def\poly{poly}
\theoremstyle{definition}
\newtheorem{theorem}{Theorem}
\newtheorem{corollary}{Corollary}
\newtheorem{lemma}{Lemma}
\newtheorem{example}{Example}
\newtheorem{question}{Question}
\begin{document}

\title{Approximate Nash Equilibria via Sampling}
\author{Yakov Babichenko\footnote{Center for the Mathematics of Information, and Department of Computing and Mathematical Sciences, California Institute of Technology. E-mail:babich@caltech.edu.}, Ron Peretz\footnote{Department of Mathematics, London School of Economics. E-mail: ronprtz@gmail.com}}


\maketitle

\begin{abstract}
We prove that in a normal form $n$-player game with $m$ actions for each player, there exists an approximate Nash equilibrium where each player randomizes uniformly among a set of $O(\log m + \log n)$ pure strategies. This result induces an $N^{\log \log N}$ algorithm for computing an approximate Nash equilibrium in games where the number of actions is polynomial in the number of players ($m=poly(n)$), where $N=nm^n$ is the size of the game (the input size).

In addition, we establish an inverse connection between the entropy of Nash equilibria in the game, and the time it takes to find such an approximate Nash equilibrium using the random sampling algorithm. 
\end{abstract}

\section{Introduction}

Sampling from a Nash equilibrium is a well-know method for proving existence of a simple approximate Nash equilibrium. By the sampling method, the (possibly complicated) mixed strategy $x_i$ of player $i$ is replaced by $k$ i.i.d. samples of pure strategies from the distribution $x_i$. These $k$ samples are each chosen at random with probability $1/k$, and together they form a simple $k$\emph{-uniform strategy} $s_i$. Equivalently, $k$-uniform strategies are mixed strategies that assign to each pure strategy a rational probability with denominator $k$. The main advantage of the $k$-uniform strategy $s_i$ over the original strategy $x_i$ is that there are at most $m^k$ such strategies (actually $\binom{m+k-1}{k}$), where $m$ is the number of actions of player $i$. Therefore, in the case where we do not know the original strategy $x_i$ (and thus we cannot produce the strategy $s_i$ from $x_i$), we can \emph{search} for the strategy $s_i$ over a relatively small set of size $m^k$. 

The sampling method has a very important consequence for the computation of approximate Nash equilibria. If we prove existence of a $k$-uniform approximate Nash equilibrium $(s_i)_{i=1}^n$ for small $k$, then we need only search exhaustively for an approximate Nash equilibrium over all the possible $n$-tuples of $k$-uniform strategies. Although this method seems naive, it provides the best upper bound that is known today for computing an approximate Nash equilibrium.

Althofer \cite{A} was the first to introduce the sampling method, when he studied two-player zero-sum games and showed existence of $k$-uniform approximately optimal strategies with $k=O(\log m)$. Althofer \cite{A} also showed that the order of $\log m$ is optimal (for two-player games). Lipton, Markakis, and Mehta \cite{LMM} generalized this result to all two-player games; i.e., they proved existence of a $k$-uniform approximate Nash equilibrium for $k=O(\log m)$. For $n$-player games, Lipton, Markakis, and Mehta \cite{LMM} proved existence of a $k$-uniform approximate Nash equilibrium for $k=O(n^2 \log m)$. H\'emon, Rougemont, and Santha \cite{HRS} simplified it to $k=O(n \log m)$. 

In the present paper, we prove existence of a $k$-uniform approximate Nash equilibrium for $k=O(\log n + \log m)$ (see Theorem \ref{theo:main}). The results in \cite{LMM} and \cite{HRS} induce a $\poly(N^{\log N})$ algorithm for computing an approximate Nash equilibrium (see \cite{N}), where $N=nm^n$ is the input size. Our result yields a $\poly(N^{\log \log N})$ algorithm for games where the number of actions of each player is polynomial in $n$ (the number of players). To our knowledge, the best previously known upper bound for this class of games is the $\poly(N^{\log N})$ of \cite{LMM}.

Our second result establishes an inverse connection between the entropy of Nash equilibria in the game and the time that it takes the sampling method algorithm to find an approximate Nash equilibrium (see Theorem \ref{theo:ent}). In particular, this result generalizes the result of Daskalakis and Papadimitriou \cite{DP} on existence of a polynomial algorithm for an approximate Nash equilibrium in \emph{small probability games}, which are a sub-class of the games where the entropy of a Nash equilibrium is very high. Daskalakis and Papadimitriou \cite{DP} proved this result for two-player games. A corollary of our result (see Corollary \ref{cor:small}) is that an appropriate generalization of that statement holds for any number of players $n$.

\section{The results}

We consider $n$-player games with $m$-actions for each player.\footnote{All the results in the paper hold also for the case where each player has a different number of actions (i.e., player $i$ has $m_i$ actions). For simplicity, we assume throughout that all players have the same number of actions $m$.} The \emph{size of the game} is denoted by $N:=nm^n$. We use the following standard notation. The set of players is $[n]=\{1,2,...,n\}$. The set of actions of each player is $A_i=[m]=\{1,2,...,m\}$. The set of strategy profiles is $A=[m]^n$. The payoff function of player $i$ is $u_i:A\rightarrow [0,1]$. The payoff function profile is denoted by $u=(u_i)_{i\in [n]}$. The set of probability distributions over a set $B$ is denoted by $\Delta(B)$. The set of mixed actions of player $i$ is $\Delta(A_i)$. The payoff function can be multilinearly extended to $u_i:\Delta(A)\rightarrow [0,1]$. 

A mixed action profile $x=(x_i)_{i\in [n]}$, where $x_i \in \Delta(A_i)$ is an $\varepsilon$-\emph{equilibrium} if no player can gain more than $\varepsilon$ by a unilateral deviation; i.e., $u_i(x)\geq u_i(a_i,x_{-i})-\varepsilon$, for every player $i$ and every action $a_i\in [m]$, where $x_{-i}$ denotes the action profile of all players other than $i$. A $0$-equilibrium is called an \emph{exact} or \emph{Nash} equilibrium.

A mixed strategy $x_i\in A_i$ is called $k$\emph{-uniform} if $x_i(a_i)=c_i/k$, where $c_i\in \mathbb{Z}$, for every action $a_i\in A_i$. Equivalently, a $k$-uniform strategy is a uniform distribution over a multi-set of $k$ pure actions. A strategy profile $x=(x_i)_{i\in [n]}$ will be called $k$\emph{-uniform} if every $x_i$ is $k$-uniform.

We use the notation $f(x)=\poly(g(x))$ if there exists a constant $c$ such that $f(x)\leq g(x)^{c}$ for large enough $x$.

\subsection{General games}
Our Main Theorem states the following:
\begin{theorem}\label{theo:main}
Every $n$-player game with $m$ actions for each player admits a $k$-uniform $\varepsilon$-equilibrium for every
\begin{equation*}
k\geq \frac{8(\ln m + \ln n - \ln \varepsilon + \ln 8)}{\varepsilon^2}.
\end{equation*}. 
\end{theorem}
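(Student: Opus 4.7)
The natural plan is a probabilistic argument via i.i.d.\ sampling from an exact Nash equilibrium. Using Nash's theorem, fix an exact equilibrium $x = (x_1, \ldots, x_n)$. For each player $i$ draw pure actions $s_i^1, \ldots, s_i^k$ i.i.d.\ from $x_i$, and set $s_i := \frac{1}{k} \sum_{t=1}^k \delta_{s_i^t}$, which is $k$-uniform by construction. Write $s = (s_1, \ldots, s_n)$. To conclude that $s$ is an $\varepsilon$-equilibrium with positive probability, it suffices by the union bound to show, for each of the $nm$ pairs $(i, a)$, that
\[
\P\bigl[u_i(a, s_{-i}) - u_i(s) > \varepsilon\bigr] < \frac{1}{nm}.
\]
Using $u_i(a, x_{-i}) \leq u_i(x)$ (the Nash property) together with the triangle inequality, this event is implied by a fluctuation of size at least $\varepsilon/2$ in either $u_i(a, s_{-i}) - u_i(a, x_{-i})$ or $u_i(s) - u_i(x)$, so everything reduces to concentration statements of this form.

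Given the target $k = \Theta((\ln n + \ln m - \ln \varepsilon)/\varepsilon^2)$, the concentration we need must be of the form $\exp(-\Omega(k\varepsilon^2))$ with \emph{no} $n$ in the exponent. The difficulty, and what I expect to be the main obstacle, is that the obvious tools do not deliver this. Applying McDiarmid's bounded differences to $u_i(a, s_{-i})$ as a function of the $k(n-1)$ independent samples, each with bounded-differences constant $1/k$, only yields $\exp(-\Omega(k\varepsilon^2 / n))$, which gives the H\'emon--Rougemont--Santha bound $k = O(n \log(nm)/\varepsilon^2)$ rather than the claimed bound. Hypercontractivity applied to $u_i(a, s_{-i})$, which is a degree-$(n-1)$ polynomial in the sampling indicators, only degrades further as $n$ grows. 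So some structural feature of the problem must be exploited to shed the factor of $n$ in the exponent.

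My best candidate for that structural feature is to concentrate the regret $u_i(a, s_{-i}) - u_i(s)$ jointly rather than controlling its two pieces separately, via the identity
\[
u_i(a, s_{-i}) - u_i(s) \;=\; \frac{1}{k}\sum_{t=1}^{k}\bigl(u_i(a, s_{-i}) - u_i(s_i^t, s_{-i})\bigr),
\]
so that, conditional on $s_{-i}$, the right-hand side is an average of $k$ i.i.d.\ bounded variables to which Hoeffding gives the desired $n$-free exponent $\exp(-\Omega(k\varepsilon^2))$. The remaining, and harder, task is to control the conditional mean $u_i(a, s_{-i}) - u_i(x_i, s_{-i})$, a multilinear functional of $s_{-i}$ whose expectation is at most $0$ by the Nash property. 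I would attempt a martingale-style analysis that reveals the $s_j$ one player at a time, hoping that the martingale increments cancel rather than accumulate into the usual $\sqrt{n/k}$ bound. This last step is where a genuinely new idea is required, and is what I expect the authors' innovation to supply.
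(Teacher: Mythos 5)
Your setup---sampling $k$ i.i.d.\ actions per player from an exact equilibrium, the union bound over the $nm$ events, and the observation that naive bounded-differences concentration only yields $\exp(-\Omega(k\varepsilon^2/n))$---matches the paper, and you have correctly located the crux. But the proposal stops exactly where the real work begins: you leave the key concentration step as ``where a genuinely new idea is required,'' and your proposed reduction (conditioning on $s_{-i}$ and applying Hoeffding to player $i$'s own $k$ samples) makes no progress, because the conditional mean $u_i(a,s_{-i})-u_i(x_i,s_{-i})$ is again a degree-$(n-1)$ multilinear function of the $k(n-1)$ samples and faces the identical obstacle. Note also that the joint treatment of the regret is unnecessary: the paper only needs $|u_i(j,s_{-i})-u_i(j,x_{-i})|\le\varepsilon/2$ for every pure action $j$, and then recovers $u_i(s)\ge u_i(a,x_{-i})-\varepsilon/2$ deterministically from the fact that the support of $s_i$ consists of best replies to $x_{-i}$.

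The missing idea is a combinatorial decomposition of the product sum. Writing $u_1(a,s_{-1})=k^{-(n-1)}\sum_{j_2,\dots,j_n\in[k]} u_1(a,b^2_{j_2},\dots,b^n_{j_n})$, the paper groups the $k^{n-1}$ terms into ``diagonals'': for a fixed offset vector $j_*=(j_2,\dots,j_n)$, the $k$ profiles $(b^2_{j_2+l},\dots,b^n_{j_n+l})$ for $l\in[k]$ (indices taken mod $k$) use pairwise disjoint samples of each player and are therefore mutually independent, each with mean $u_1(a,x_{-1})$. Hoeffding applied along each diagonal gives probability at most $2e^{-k\varepsilon^2/8}$ that the diagonal average deviates by more than $\varepsilon/4$, with no $n$ in the exponent; averaging the indicator of this bad event over the $k^{n-1}$ diagonals and applying Markov's inequality then bounds $\P\left(|u_1(a,s_{-1})-u_1(a,x_{-1})|\ge\varepsilon/2\right)$ by $8e^{-k\varepsilon^2/8}/\varepsilon$, which is below $1/(mn)$ for the stated $k$. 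Without this (or an equivalent) device, your argument does not close, so the proposal as written has a genuine gap at its central step.
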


\begin{corollary}\label{cor:main}
Let $m=\poly(n)$, and let $N=nm^n$ be the input size of an $n$-player $m$-action normal-form game. For every constant $\varepsilon>0$, there exists an algorithm for computing an $\varepsilon$-equilibrium in $\poly(N^{\log \log N})$ steps.
\end{corollary}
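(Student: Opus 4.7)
The plan is to combine the existence statement of Theorem~\ref{theo:main} with a brute-force search over $k$-uniform profiles. Correctness will then be immediate, and only the running-time bookkeeping genuinely requires work.

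First I would fix the given constant $\varepsilon>0$ and apply Theorem~\ref{theo:main} to obtain a $k$-uniform $\varepsilon$-equilibrium for some $k = O(\log m + \log n)$, where the implicit constant absorbs the $\varepsilon$-dependent terms. Under the hypothesis $m = \poly(n)$ we have $\log m = O(\log n)$, and hence $k = O(\log n)$.

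Next I would describe the algorithm: enumerate every $n$-tuple $(s_1,\dots,s_n)$ of $k$-uniform strategies and, for each profile, verify the $\varepsilon$-equilibrium condition directly. Since each player's $k$-uniform strategies correspond to multisets of size $k$ from $[m]$, there are at most $\binom{m+k-1}{k} \le m^k$ choices per player, and thus at most $m^{nk}$ profiles in total. For each profile, checking the $\varepsilon$-equilibrium condition reduces to computing $u_i(s)$ and $u_i(a_i,s_{-i})$ for every player $i$ and every pure deviation $a_i$; this takes time polynomial in the game-description size $N$. By Theorem~\ref{theo:main}, at least one enumerated profile will pass the test, so the algorithm is correct.

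The only remaining step, and the main obstacle, is the exponent arithmetic needed to conclude that $m^{nk}\cdot \poly(N) = \poly(N^{\log\log N})$. With $k = O(\log n)$ and $\log m = O(\log n)$ I would bound $m^{nk} = 2^{O(n \log^2 n)}$. On the other side, $\log N = \log n + n\log m = O(n \log n)$, so $\log\log N = O(\log n)$ and therefore $N^{\log\log N} = 2^{(\log N)(\log\log N)} = 2^{O(n\log^2 n)}$. The two exponents match, and the runtime bound follows. It is precisely at this matching that the hypothesis $m = \poly(n)$ is essential: without it, $\log m$ is no longer controlled by $\log n$, and the enumeration cost would escape the $N^{\log\log N}$ budget.
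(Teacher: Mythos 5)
Your proposal is correct and follows essentially the same route as the paper: enumerate the at most $m^{nk}$ $k$-uniform profiles guaranteed by Theorem~\ref{theo:main} and check each in $\poly(N)$ time, then verify that $m^{nk}=\poly(N^{\log\log N})$ using $k=O(\log n)$ under the hypothesis $m=\poly(n)$. One small point of rigor: in the final matching you should really supply a \emph{lower} bound $(\log N)(\log\log N)\geq (n\log m)(\log n)$ rather than an upper bound on $N^{\log\log N}$ (showing both quantities are $2^{O(n\log^2 n)}$ does not by itself give $m^{nk}\leq\poly(N^{\log\log N})$), but this lower bound is immediate and the paper's own one-line computation is no more detailed.
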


\begin{proof}[Proof of Corollary \ref{cor:main}]
The number of all the possible $k$-uniform profiles is at most $m^{nk}$. Note that
\begin{equation*}
m^{nk}=\poly(m^{n \log n})=\poly((m^n)^{\log \log (m^n)})=\poly(N^{\log \log N}). 
\end{equation*}
Therefore the exhaustive search algorithm that searches for an $\varepsilon$-equilibrium over all possible $k$-uniform profiles finds such an $\varepsilon$-equilibrium after at most $\poly(N^{\log \log N})$ iterations.
\end{proof}

\begin{proof}[Proof of Theorem \ref{theo:main}]
The proof uses the sampling method. Let \linebreak
$k \geq \frac{8(\ln m + \ln n - \ln \varepsilon + \ln 8)}{\varepsilon^2}$, and let $x=(x_i)_{i\in [n]}$ be an exact equilibrium of the game $u=(u_i)_{i\in [n]}$. For every player $i$, we sample $k$ i.i.d. pure strategies $(b^i_j)_{j \in k}$ according to the distribution $x_i$ ($b^i_j \in A_i$). Denote by $s_i$ the uniform distribution over the pure actions $(b^i_j)_{j \in k}$. It is enough to show that with positive probability the profile $(s_i)_{i\in [n]}$ forms an $\varepsilon$-equilibrium.

For every player $i$ and strategy $j\in A_i=[m]$, we define a set of forbidden $s$ values:
\begin{equation*}
E_{i,j}=\{\mathbf{s}\in \bigtimes_{l\in[n]}\Delta(A_l):|u_i(j,x_{-i})-u_i(j,\mathbf{s_{-i}})|\geq \frac{\varepsilon}{2}\}.
\end{equation*}

Note that almost every realization of $s$ is absolutely continuous with respect to $x$, written $s\ll x$; i.e., the event $\{\mathrm{support}(s)\subset \mathrm{support}(x)\}$ has probability 1. Therefore, it is sufficient to verify that $\P(s\notin\cup_{i,j} E_{i,j})>0$, since every strategy profile $\mathbf s\ll x$, $\mathbf{s}\notin \cup_{i,j} E_{i,j}$ is an $\varepsilon$-equilibrium, by
\begin{multline*}
u_i(a_i,\mathbf{s_{-i}}) \leq u_i(a_i,x_{-i})+\frac{\varepsilon}{2}\leq \sum_{b\in A_i} \mathbf{s_i}(b) u_i(b,x_{-i}) +\frac{\varepsilon}{2}\\
\leq \sum_{b\in A_i} \mathbf{s_i}(b) u_i(b,\mathbf{s_{-i}}) +\varepsilon = u_i(\mathbf s)+\varepsilon,
\end{multline*}
where the second inequality holds because all the strategies in the support of $\mathbf{s_i}$ are in the support of $x_i$, which contains only best replies to $x_{-i}$.

To show that $\P (s\in\cup_{i,j} E_{i,j})<1$, it is sufficient to show that $\P(s\in E_{i,j})\leq \frac{1}{mn}$ because we have $mn$ such events $\{s\in E_{i,j}\}$.

Up to this point, the arguments of the proof are similar to \cite{LMM} and \cite{HRS}. The estimation of the probability $\P(s\in E_{i,j})$, however, uses more delicate arguments. Let us estimate $\P(s\in E_{1,1})$.

We begin by rewriting the payoff of player 1. For every $l\in[k]$, we can write
\begin{equation*}
u_1(1,s_{-1})=\frac{1}{k^{n-1}}\underset{j_1,j_2,...,j_n \in [k]}{\sum} u_1(1,b^2_{j_2+l},b^3{j_3+l},...,b^n_{j_n+l}) 
\end{equation*}
where the indexes $j_i+l$ are taken modulo $k$. If we take the average over all possible $l$ we have
\begin{equation}\label{eq:pay}
u_1(1,s_{-1})=\frac{1}{k^{n-1}}\underset{j_1,j_2,...,j_n \in [k]}{\sum} \frac{1}{k} \underset{l\in[k]}{\sum} u_1(1,b^2_{j_2+l},b^3{j_3+l},...,b^n_{j_n+l}). 
\end{equation}
For every initial profile of indexes $j_*=(j_2,j_3,...,j_n)\in [k]^{n-1}$ and every $l\in [k]$, we denote $b^{-1}_{j_*+l}:=(b^2_{j_2+l},b^3_{j_3+l},...,b^n_{j_n+l})\in A_{-1}$, and we define the random variable 
\begin{equation}\label{eq:d}
d(j_*):= \begin{cases}
0 & \text{if } \left\lvert \frac{1}{k}\underset{l\in[k]}{\sum}u_1(1,b^{-1}_{j_*+l})-u_1(1,x_{-1}) \right\rvert \leq \dfrac{\varepsilon}{4} \\
1 & \text{otherwise.}
\end{cases}
\end{equation}
By the definition of $d(j_*)$, we have
\begin{equation}\label{eq:din}
d(j_*)+\frac{\varepsilon}{4} \geq \left\lvert \frac{1}{k}\underset{l\in[k]}{\sum}u_1(1,b^{-1}_{j_*+l})-u_1(1,x_{-1}) \right\rvert.
\end{equation}
Note also that for any fixed $j_*$ the random action profiles $b^{-1}_{j_*+1},\ldots,b^{-1}_{j_*+k}$ are independent. Therefore by Hoeffding's inequality (see \cite{H}) we have
\begin{equation}\label{eq:hof}
\E(d(j_*))\leq 2 e^{-\frac{k\varepsilon^2}{8}}.
\end{equation}
Using representation (\ref{eq:pay}) of the payoffs and inequalities (\ref{eq:din}) and (\ref{eq:hof}), we get
\begin{equation}\label{eq:fin}
\begin{aligned}
\P (s\in E_{1,1}) &= \P \left( \left\lvert \frac{1}{k^{n-1}} \underset{j_* \in [k]^{n-1}}{\sum} \frac{1}{k} \underset{l\in[k]}{\sum} u_1(1,b^{-1}_{j_*+l})-u_1(1,x_{-1}) \right\rvert \geq \frac{\varepsilon}{2} \right) \\
&\leq \P \left( \frac{1}{k^{n-1}} \underset{j_* \in [k]^{n-1}}{\sum} \left\lvert \frac{1}{k} \underset{l\in[k]}{\sum} u_1(1,b^{-1}_{j_*+l})-u_1(1,x_{-1}) \right\rvert \geq \frac{\varepsilon}{2} \right) \\
&\leq \P \left( \frac{1}{k^{n-1}} \underset{j_* \in [k]^{n-1}}{\sum} d(j_*) \geq \frac{\varepsilon}{4} \right) \leq \frac{8 e^{-\frac{k\varepsilon^2}{8}}}{\varepsilon}
\end{aligned}
\end{equation}
where the last inequality follows from Markov's inequality.
Putting $k\geq \frac{8(\ln m + \ln n - \ln \varepsilon + \ln 8)}{\varepsilon^2}$ in inequality (\ref{eq:fin}), we get $\P (E_{1,1}) \leq \frac{1}{mn}$.
\end{proof}

\subsection{Games with a high-entropy equilibrium}

In the sequel it will be convenient to consider the set of $k$-uniform strategies as the set of \emph{ordered} $k$-tuples of pure actions. To avoid ambiguity we will call those strategies $k$\emph{-uniform ordered strategies}.\footnote{Many $k$-uniform ordered strategies correspond to the same mixed strategy of the player in the game.} Now the number of $k$-uniform ordered profiles is exactly $m^{nk}$.

The algorithm of Corollary \ref{cor:main} suggests that we should search over all the possible $k$-uniform profiles (or $k$-uniform ordered profiles), one by one, until we find an approximate equilibrium. Consider now the case where a large fraction of the $k$-uniform ordered strategies form an approximate equilibrium, say a fraction of $1/r$. In such a case we can pick $k$-uniform ordered profiles \emph{at random}, and then we will find the approximate equilibrium in expected time $r$.

Define the \emph{$k$-uniform random sampling algorithm} ($k$-URS) to be the algorithm described above; i.e., it samples uniformly at random $n$-tuples of $k$-uniform ordered strategies and checks whether this profile forms an $\varepsilon$-equilibrium.\footnote{Checking whether a strategy profile forms an approximate equilibrium can always be done in $\poly(N)$ time. Actually, it can even be done by using only $\poly(n,m)$ samples from the mixed profile. Using the samples, the answer will be correct with a probability that is exponential (in $n$ and $m$) close to 1 (see, e.g., \cite{B}, proof of Theorem 2).}

An interesting question arises: For which games does the $k$-URS algorithm find an approximate equilibrium fast? Daskalakis and Papadimitriou \cite{DP} focused on two-player games with $m$ actions, and they showed that the $k$-URS algorithm finds an approximate equilibrium after $\poly(m)$ samples for \emph{small-probability games}. A \emph{small-probability game} is a game that admits a Nash equilibrium where each pure action is played with probability at most $c/m$ for some constant $c$.

Here we generalize the result of Daskalakis and Papadimitriou to $n$-player games. Instead of focusing on the specific class of small-probability games we establish a general connection between the entropy of equilibria in the game and the expected number of samples of the $k$-URS algorithm until an approximate Nash equilibrium is found.

\begin{theorem}\label{theo:ent}
Let $u$ be an $n$-player game with $m$ actions for each player, with a Nash equilibrium $x=(x_i)$. Let $k\geq \max\{\frac{16}{\varepsilon^2}(\ln n + \ln m -\ln \varepsilon +2), e^{16/\varepsilon^2}\}=O(\log m + \log n)$; then the $k$-uniform random sampling algorithm finds an $\varepsilon$-equilibrium after at most $4\cdot 2^{k(n\log_2 m -H(x))}$ samples in expectation, where $H(x)$ is Shannon's entropy of the Nash equilibrium $x$.
\end{theorem}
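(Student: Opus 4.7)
My plan is to bound the expected number of samples by $1/\P_U(G)$, where $G\subseteq [m]^{nk}$ is the set of ordered $k$-uniform profiles that form $\varepsilon$-equilibria and $\P_U$ is the uniform distribution on $[m]^{nk}$ (the law of a single $k$-URS draw). Independent $k$-URS trials hit $G$ with probability $\P_U(G)$ each time, so the expected hitting time is $1/\P_U(G)$, and it suffices to show $\P_U(G)\geq \tfrac{1}{4}\cdot 2^{-k(n\log_2 m - H(x))}$.

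First, I re-run the sampling analysis of Theorem~\ref{theo:main}. The hypothesis $k\geq \tfrac{16}{\varepsilon^2}(\ln n+\ln m -\ln\varepsilon+2)$ is (up to a small additive constant) twice the threshold used there, so feeding it into the Hoeffding--Markov chain (\ref{eq:din})--(\ref{eq:fin}) yields $\P_{q_x}(s\in E_{i,j})\leq O(\varepsilon/(nm)^2)$ for every $i,j$, and hence $\P_{q_x}(s\in G)\geq 3/4$, where $q_x:=x^{\otimes k}$ is the law of the ordered profile $s$ obtained by drawing each $b^i_j$ i.i.d.\ from $x_i$.

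The core step is then an AEP-style concentration of the self-information
\[
-\log_2 q_x(s)=\sum_{i=1}^n\sum_{j=1}^k -\log_2 x_i(b^i_j),
\]
a sum of independent non-negative summands with mean exactly $kH(x)$. A Chebyshev (or Bernstein) bound, using a bound on the entropy-variance of each $x_i$ in terms of $\log_2 m$, produces a typical set $T$ with $\P_{q_x}(T)\geq 3/4$ on which $q_x(s)\leq C\cdot 2^{-kH(x)}$ for a small constant $C$. The second hypothesis $k\geq e^{16/\varepsilon^2}$ is precisely what allows one to absorb the concentration deviation into a bounded $C$ instead of a $2^{o(k)}$ factor. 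Combining the two good events, $G':=G\cap T$ carries $q_x$-mass at least $1/2$ and satisfies $q_x(s)\leq C\cdot 2^{-kH(x)}$ pointwise, so $|G'|\geq \tfrac{1}{2C}\cdot 2^{kH(x)}$ and
\[
\P_U(G)\geq \frac{|G'|}{m^{nk}}\geq \frac{1}{2C}\cdot 2^{-k(n\log_2 m-H(x))},
\]
which for $C\leq 2$ delivers the claimed bound of $4\cdot 2^{k(n\log_2 m-H(x))}$ on the expected number of samples.

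The principal obstacle I foresee is the concentration step: because $-\log_2 x_i(a)$ is unbounded on actions of very small probability, Hoeffding is not directly available and one must control the variance of the self-information by hand. Squeezing the resulting slack into a universal constant (rather than the $O(\sqrt{k})$ deviation that a naive Chebyshev application would produce) is exactly what requires the second term $e^{16/\varepsilon^2}$ in the lower bound on $k$.
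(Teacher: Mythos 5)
Your overall framing---reduce the theorem to a lower bound of $\tfrac{1}{4}2^{-k(n\log_2 m-H(x))}$ on the uniform measure of the set $G$ of good ordered profiles, and get that by lower-bounding $|G|$ via the fact that $G$ carries most of the mass of $q_x=\bigotimes_i x_i^{\otimes k}$---is exactly the paper's. But the concentration step you rest this on is a genuine gap, and it cannot be repaired in the form you state. The self-information $-\log_2 q_x(s)$ is a sum of $nk$ independent terms, and the per-term variance (the varentropy of $x_i$) can be of order $(\log_2 m)^2$: take $x_i$ with one atom of mass $\tfrac12$ and the remaining mass spread uniformly. Then $-\log_2 q_x(s)$ genuinely fluctuates around its mean $kH(x)$ by $\Theta(\sqrt{nk}\,\log_2 m)$, so \emph{every} set $T$ with $\P_{q_x}(T)\ge 3/4$ must contain points with $q_x(s)\ge 2^{-kH(x)}\cdot 2^{c\sqrt{nk}\log_2 m}$; there is no constant $C$, and your final count degrades by an unbounded factor $2^{-\Theta(\sqrt{nk}\log_2 m)}$. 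The hypothesis $k\ge e^{16/\varepsilon^2}$ does not rescue this---enlarging $k$ only enlarges the $\sqrt{nk}$ deviation; in the paper that term exists for an unrelated bookkeeping reason (absorbing $\ln k$ and $\ln\log_2 m$ when pushing the per-event failure probability down to $\frac{1}{mn}\cdot\frac{1}{nk\log_2 m}$).

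There is a second, independent problem: a good-event probability of $3/4$ is intrinsically too weak to certify $|G|\gtrsim 2^{kH(x)}$. If $q_x$ put mass $3/4$ on a single profile and spread the rest uniformly, a set of probability $3/4$ could be a singleton while $H$ is of order $\tfrac14 nk\log_2 m$. The paper avoids both issues at once: the stronger hypothesis on $k$ is used to drive the union-bound failure probability down to $\frac{1}{nk\log_2 m}=\frac{1}{\log_2|M|}$, and then Lemma~\ref{lem:inf}---a chain-rule computation, $H(y)\le \log_2|S|+\P(y\notin S)\log_2|M|+1\le\log_2|S|+2$---converts that into $|S|\ge\tfrac14 2^{H(y)}$ with \emph{no} pointwise (AEP-style) control of $q_x$ on the good set. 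Replacing your typical-set argument by this entropy-counting lemma, and tracking the sharper failure probability through the Hoeffding--Markov chain of Theorem~\ref{theo:main}, is what you are missing.
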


The following corollary of this theorem is straightforward.

\begin{corollary}\label{cor:ent}
Families of games where $n\log_2 m - \max\limits_{x\in \mathrm{NE}}H(x)$ is bounded admit a $poly(m,n)$ probabilistic algorithm for computing an approximate Nash equilibrium.
\end{corollary}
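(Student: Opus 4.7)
\emph{Proof plan.} The $k$-URS algorithm samples a $k$-uniform ordered profile $b\in[m]^{nk}$ uniformly at random, so the expected number of samples until one lies in the set $G\subset[m]^{nk}$ of $\varepsilon$-equilibrium profiles is $1/\pi(G)$, where $\pi$ is the uniform distribution on $[m]^{nk}$. To obtain the stated bound it suffices to show $\pi(G)\ge \tfrac{1}{4}\cdot 2^{-k(n\log_2 m-H(x))}$. The plan is a change-of-measure argument from $\pi$ to the product distribution $\mu(b):=\prod_{i,j}x_i(b^i_j)$ used in the proof of Theorem \ref{theo:main}. Writing
\begin{equation*}
\pi(G)=\E_{b\sim\mu}\!\left[\frac{\mathbbm{1}_G(b)}{m^{nk}\mu(b)}\right]=\frac{\mu(G)}{m^{nk}}\cdot\E_\mu[1/\mu(b)\mid G],
\end{equation*}
Jensen's inequality applied to the convex function $y\mapsto 2^y$ gives $\E_\mu[1/\mu(b)\mid G]\ge 2^{\E_\mu[-\log_2\mu(b)\mid G]}$. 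Since $\E_\mu[-\log_2\mu(b)]=kH(x)$ (the entropy of the product distribution $\mu$), the desired lower bound reduces to showing (i) $\mu(G)\ge 1/2$ and (ii) $\E_\mu[-\log_2\mu(b)\mid G]\ge kH(x)-1$, since then $\pi(G)\ge \tfrac12\cdot 2^{kH(x)-1}/m^{nk}=\tfrac14\cdot 2^{-k(n\log_2 m-H(x))}$.

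Step (i) is obtained by re-running the proof of Theorem \ref{theo:main} with the stronger $k$-bound: the per-event estimate $\P(s\in E_{i,j})\le 8e^{-k\varepsilon^2/8}/\varepsilon$ together with the union bound over the $mn$ events and the present hypothesis $k\ge 16(\ln n+\ln m-\ln\varepsilon+2)/\varepsilon^2$ gives $\mu(G^c)\le\eta$ with $\eta$ polynomially smaller in $n,m,1/\varepsilon$ than the $1/(mn)$ threshold required by the main theorem. For step (ii), set $\alpha:=\E_\mu[(-\log_2\mu(b))\mathbbm{1}_{G^c}]$. Since $\E_\mu[-\log_2\mu]=kH(x)$, we have
\begin{equation*}
\E_\mu[-\log_2\mu\mid G]=\frac{kH(x)-\alpha}{\mu(G)}\ge kH(x)-\alpha,
\end{equation*}
so it suffices to verify $\alpha\le 1$. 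By Cauchy--Schwarz, $\alpha\le\sqrt{\E_\mu[(\log_2\mu)^2]\cdot\eta}$, and the second moment decomposes as $\E_\mu[(\log_2\mu)^2]=k\sigma^2+(kH(x))^2$, where $\sigma^2=\sum_i\mathrm{Var}_{b\sim x_i}(-\log_2 x_i(b))$ is the total self-information variance.

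The main obstacle is establishing $\alpha\le 1$ in the presence of a possibly large variance $\sigma^2$: since $\mathrm{Var}(-\log_2 x_i)$ can be of order $(\log m)^2$, the Cauchy--Schwarz estimate $\alpha\lesssim(kH(x)+\sigma\sqrt{k})\sqrt\eta\lesssim kn\log m\cdot\sqrt\eta$ demands a very small leakage, $\eta\lesssim 1/(kn\log m)^2$. This is precisely the role of the two separate lower bounds on $k$ in the hypotheses: the first, $k\ge 16(\ln n+\ln m-\ln\varepsilon+2)/\varepsilon^2$, doubles the $\varepsilon^{-2}$ factor of Theorem \ref{theo:main} to push $\eta$ down to $O(\varepsilon/(nm))$ in the $n,m$-large regime, while the second, $k\ge e^{16/\varepsilon^2}$, ensures that the exponential decay $e^{-k\varepsilon^2/8}$ absorbs the $\varepsilon$-dependent prefactors uniformly in $m,n$. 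Once $\alpha\le 1$ is established, combining with $\mu(G)\ge 1/2$ yields $\pi(G)\ge \tfrac14\cdot 2^{-k(n\log_2 m-H(x))}$, so the expected number of $k$-URS samples is at most $4\cdot 2^{k(n\log_2 m-H(x))}$.
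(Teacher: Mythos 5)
Your overall plan---lower-bounding $\pi(G)$ by a change of measure to $\mu$ followed by Jensen---is a legitimate alternative to the paper's route, which proves Theorem \ref{theo:ent} via an information-theoretic counting lemma (Lemma \ref{lem:inf}) and then gets the corollary in one line from $k=O(\log m+\log n)$. But your step (ii) has a genuine gap. To get $\alpha=\E_\mu[(-\log_2\mu)\mathbbm{1}_{G^c}]\le 1$ you invoke Cauchy--Schwarz, which requires $\eta=\mu(G^c)\le 1/\E_\mu[(\log_2\mu)^2]$. Since $\E_\mu[(\log_2\mu)^2]\ge(kH(x))^2$, and the corollary concerns precisely the regime $H(x)\approx n\log_2 m$, this forces $\eta\lesssim (kn\log_2 m)^{-2}$---you say as much yourself. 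The hypothesis of Theorem \ref{theo:ent}, however, only delivers one power, not two: the condition $k\ge\frac{16}{\varepsilon^2}(\ln n+\ln m-\ln\varepsilon+2)$ yields $\P(s\in E_{i,j})\le\frac{1}{mn}\cdot\frac{1}{nk\log_2 m}$ and hence, after the union bound over the $mn$ events, $\eta\le\frac{1}{nk\log_2 m}$. Your claimed $\eta=O(\varepsilon/(nm))$ does not rescue the argument: for fixed $m$ and large $n$ one has $\varepsilon/(nm)\gg(kn\log_2 m)^{-2}$, so $\alpha\le 1$ does not follow. The second condition $k\ge e^{16/\varepsilon^2}$ does not help here either; its role is to absorb the $\ln k$ and $\ln\log_2 m$ terms in the exponent, not to shrink $\eta$ quadratically.

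The fix is to bound $\alpha$ by an entropy decomposition rather than by a second moment: writing $\alpha=\eta\,\E_\mu[-\log_2\mu\mid G^c]=\eta\bigl(H(b\mid b\in G^c)+\log_2(1/\eta)\bigr)\le\eta\, nk\log_2 m+\eta\log_2(1/\eta)$, which is $O(1)$ as soon as $\eta\le\frac{1}{nk\log_2 m}$, with no dependence on the self-information variance $\sigma^2$ that obstructs your Cauchy--Schwarz estimate. This is exactly the content of the paper's Lemma \ref{lem:inf}, whose one-line proof conditions $H(y)$ on the indicator of $S$ and uses the trivial bound $H(y\mid y\notin S)\le\log_2|M|$, giving $|G|\ge\frac14 2^{kH(x)}$ directly from $\eta\le 1/\log_2(m^{nk})$. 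Two smaller points: your change-of-measure identity needs $G\subset\mathrm{supp}(\mu)$, which holds if you take $G$ to be the set of sampled profiles $b\ll x$ with $b\notin\cup_{i,j}E_{i,j}$ (a subset of the $\varepsilon$-equilibria, which suffices for a lower bound); and you stop at the bound $4\cdot 2^{k(n\log_2 m-H(x))}$ of Theorem \ref{theo:ent}, whereas Corollary \ref{cor:ent} still requires the closing observation that $k=O(\log m+\log n)$ makes $4\cdot 2^{kO(1)}=\poly(m,n)$ when $n\log_2 m-\max_{x\in\mathrm{NE}}H(x)$ is bounded.
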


The corollary follows from the fact that $k=O(\log m + \log n)$, and therefore $4\cdot 2^{kO(1)} = \poly(n,m)$. 

A special case where $n\log_2 m-H(x)$ is constant is that of small-probability games with a \emph{constant} number of players $n$.

\begin{corollary}\label{cor:small}
Let $c\geq 1$, and let $u$ be an $n$-player $m$-action game with a Nash equilibrium $x=(x_i)_{i\in [n]}$, where $x_i(a_i) \leq \frac{c}{m}$ for players $i$ and all actions $a_i\in A_i$. Let $k=O(\log m)$, as defined in Theorem \ref{theo:ent}. Then the expected number of samples of the $k$-URS algorithm is at most $4\cdot 2^{k n \log c}=\poly(m)$. 
\end{corollary}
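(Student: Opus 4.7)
The plan is to reduce Corollary \ref{cor:small} directly to Theorem \ref{theo:ent} by proving a clean lower bound on the Shannon entropy $H(x)$ of any small-probability Nash equilibrium, and then plugging it into the bound $4\cdot 2^{k(n\log_2 m - H(x))}$ given by Theorem \ref{theo:ent}.

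First, I would observe that in a Nash equilibrium the players randomize independently, so the joint distribution $x=(x_1,\ldots,x_n)$ is a product distribution, and hence its entropy decomposes as $H(x)=\sum_{i=1}^n H(x_i)$. Next, I would use the hypothesis $x_i(a_i)\le c/m$ pointwise to lower-bound each $H(x_i)$: since $-\log_2 x_i(a_i)\ge \log_2(m/c)$ for every $a_i$ in the support, and probabilities outside the support contribute zero, one obtains
\begin{equation*}
H(x_i)=\sum_{a_i\in A_i} x_i(a_i)\bigl(-\log_2 x_i(a_i)\bigr)\ \ge\ \log_2(m/c)\ =\ \log_2 m - \log_2 c.
\end{equation*}
Summing over $i$ gives $H(x)\ge n\log_2 m - n\log_2 c$, equivalently $n\log_2 m - H(x)\le n\log_2 c$.

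Plugging this into Theorem \ref{theo:ent} immediately yields that the $k$-URS algorithm terminates after at most $4\cdot 2^{kn\log_2 c}$ samples in expectation. Since $k=O(\log m)$ and $n$ and $c$ are treated as constants, this quantity is $\poly(m)$, completing the argument.

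I do not anticipate a real obstacle: the whole content is the entropy lower bound $H(x_i)\ge \log_2(m/c)$, which is a one-line application of the hypothesis, plus additivity of entropy for product distributions (the independence of Nash strategies across players). The only point requiring a little care is to note that the bound $-\log_2 x_i(a_i)\ge \log_2(m/c)$ only needs to hold on the support, but since terms outside the support contribute $0$ to $H(x_i)$ under the convention $0\log 0=0$, the inequality goes through unchanged.
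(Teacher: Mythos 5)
Your proposal is correct and matches the paper's own (very brief) argument exactly: the paper likewise uses $H(x)=\sum_i H(x_i)\geq n(\log_2 m-\log_2 c)$ and plugs this into the bound of Theorem \ref{theo:ent}. Your write-up simply makes explicit the one-line entropy estimate that the paper states without derivation.
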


The corollary follows from the fact that the entropy of the Nash equilibrium $x$ is $H(x)=\sum_{i\in [n]} H(x_i) \geq n(\log_2 m-\log_2 c)$.

The following example demonstrates that even in the case of two-player games, the class of games that have PTAS according to Corollary \ref{cor:ent} is slightly wider than the class of small-probability games.

\begin{example}
Consider a two-player $m$-action game where the equilibrium is $x=(x_1,x_2)$, where $x_1$ is the uniform distribution over all actions $x_1=(\frac{1}{m},\frac{1}{m},...,\frac{1}{m})$, and $x_2=(\frac{1}{\sqrt{m}},\frac{1}{m+\sqrt{m}},\frac{1}{m+\sqrt{m}},...,\frac{1}{m+\sqrt{m}})$. This game is not a small-probability game, but it does satisfy $n\log_2 m-H(x)=o(1)$: 
\begin{eqnarray*}
2\log_2 m - H(x) &\leq & \log_2 m - \frac{m-1}{m+\sqrt{m}}\log_2 (m+\sqrt{m}) \\
 &\leq & \frac{1}{\sqrt{m}+1} \log m = o(1).
\end{eqnarray*}
\end{example}

In the proof of Theorem \ref{theo:ent} we use the following lemma from information theory.

\begin{lemma}\label{lem:inf}
Let $y$ be a random variable that assumes values in a finite set $M$. Let $S\subset M$ such that $\P (y\in S)\geq 1-\frac{1}{\log_2 |M|}$; then $|S|\geq \frac{1}{4} 2^{H(y)}$.
\end{lemma}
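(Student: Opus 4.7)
The plan is to bound $H(y)$ from above by something close to $\log_2|S|$, using the fact that once we condition on whether or not $y$ lies in $S$, its entropy is controlled by the logarithm of the set it can live in. The natural auxiliary object is the binary indicator $T=\mathbf{1}_{y\in S}$, for which the chain rule gives
\begin{equation*}
H(y)=H(y,T)=H(T)+H(y\mid T),
\end{equation*}
since $T$ is a deterministic function of $y$.

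Next I would bound each of the two terms. The binary entropy satisfies $H(T)\leq 1$. For the conditional entropy, split on the value of $T$: conditional on $T=1$ the variable $y$ is supported on $S$, so its entropy is at most $\log_2|S|$; conditional on $T=0$ it is supported on $M\setminus S$, so its entropy is at most $\log_2|M|$. Thus
\begin{equation*}
H(y\mid T)\leq \P(y\in S)\log_2|S|+\P(y\notin S)\log_2|M|\leq \log_2|S|+\frac{\log_2|M|}{\log_2|M|},
\end{equation*}
where in the last step I used the hypothesis $\P(y\notin S)\leq 1/\log_2|M|$ and the trivial bound $\P(y\in S)\leq 1$.

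Combining the two bounds yields $H(y)\leq \log_2|S|+2$, which rearranges to $|S|\geq \tfrac14\, 2^{H(y)}$, as required. There is no real obstacle here; the only subtle point is seeing that introducing the indicator $T$ reduces the problem to the elementary inequality $H(y\mid \text{support containment})\leq \log_2|\text{support}|$, after which the assumption on $\P(y\in S)$ is used exactly once to absorb the cost of the ``bad'' event into the constant $2$.
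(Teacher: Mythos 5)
Your proof is correct and follows essentially the same route as the paper: both decompose $H(y)=H(\mathbf{1}_{\{y\in S\}})+H(y\mid \mathbf{1}_{\{y\in S\}})$ via the chain rule, bound the indicator's entropy by $1$ and the conditional entropies by $\log_2|S|$ and $\log_2|M|$ respectively, and use the hypothesis to absorb $\P(y\notin S)\log_2|M|\leq 1$. No differences worth noting.
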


\begin{proof}
\begin{equation*}
\begin{aligned}
H(y) &=\P (y\in S) H(y|y\in S) + \P (y\notin S) H(y|y\notin S) + H(\mathbbm{1}_{\{y\in S\}}) \\
&\leq \log_2 |S| + \P (y\notin S) \log_2 |M| +1 \leq \log_2 |S| + 2.
\end{aligned}
\end{equation*}
\end{proof}

\begin{proof}[Proof of Theorem \ref{theo:ent}]
Note that $k\geq\max\{\frac{16}{\varepsilon^2}(\ln n + \ln m -\ln \varepsilon +2), e^{16/\varepsilon^2}\}$ guarantees that
\begin{equation*}
\frac{8 e^{-\frac{k\varepsilon^2}{8}}}{\varepsilon} \leq \frac{1}{mn} \frac{1}{nk log_2 m}.
\end{equation*}
By considering inequality (\ref{eq:fin}) in the proof of Theorem \ref{theo:main}, we can see that the above choice of $k$ implies that $\P (E_{1,1}) \leq \frac{1}{mn} \frac{1}{nk \log_2 m}$, which implies that $\P(s\in\cup_{i,j} E_{i,j}) \leq \frac{1}{nk \log_2 m}$. This means that if we sample $k$-uniform ordered strategy profiles according to the Nash equilibrium $x$, then the resulting $k$-uniform ordered strategies form an $\varepsilon$-equilibrium with a probability of at least $1-\frac{1}{nk \log_2 m}=1-\frac{1}{\log_2(m^{nk})}$.

Next, using Lemma \ref{lem:inf}, we provide a lower bound on the number of $k$-uniform profiles that form an $\varepsilon$-equilibrium. The random $k$-uniform profiles are elements of a set of size $m^{nk}$. The entropy of the random $k$-uniform profile is $kH(x)$. The probability that the random profile will form an $\varepsilon$-equilibrium is at least $1-\frac{1}{\log_2(m^{nk})}$. Therefore, by Lemma \ref{lem:inf}, we get that there are at least $\frac{1}{4}2^{kH(x)}$ different $k$-uniform profiles that are $\varepsilon$-equilibria.

To conclude, the fraction of the $k$-uniform profiles that form an $\varepsilon$-equilibrium (among all the $k$-uniform profiles) is at least:
\begin{equation*}
\frac{\frac{1}{4}2^{kH(x)}}{m^{nk}}=\frac{1}{4}2^{k(H(x)-n\log_2 m)}.
\end{equation*}
Therefore, the expected time for finding an $\varepsilon$-equilibrium is at most $4\cdot 2^{k(n\log_2 m -H(x))}$.
\end{proof}

\section{Discussion}
Having established an upper bound of $O(\log m + \log n)$, it is natural to ask whether it is tight. Althofer \cite{A} provided a lower bound of the order $\log m$ that matches our upper bound in the case where the number of players is not much larger than the number of pure strategies; i.e., $n=\poly(m)$. In general, the tightness of our upper bound remains an open question. A similar question regarding the existence of \emph{pure} approximate equilibria in \emph{Lipschitz} games with many players arose in a related work by Azrieli and Shmaya \cite{AS}.

Let us call games with $n$ players, $m$ actions for each player, and payoffs in $[0,1]$, \emph{normalized $n$-player $m$-action games}. To pinpoint the limits of our understanding of the problem, consider the following questions.

\begin{question}\label{q untight}
Is there a function $k\colon\mathbb (0,1)\to\mathbb N$ ($k$ dependents on $\varepsilon$ only, and not on the number of players $n$), such that every normalized $n$-player two-action game admits an $\varepsilon$-equilibrium in which every player employs a mixed strategy whose coefficients are rational numbers with a denominator at most $k(\varepsilon)$?  
\end{question}

\begin{question}\label{q tight}
Is there an $\varepsilon>0$ and a constant $C>0$, such that for every $n,m\in \mathbb N$ there exists a normalized $n$-player $m$-action game that does not admit any $\varepsilon$-equilibrium in which every player employs a mixed strategy whose coefficients are rational numbers with a denominator at most $C(\log n+\log m)$?
\end{question}

Note that a positive answer to Question~\ref{q tight} means that our upper bound \emph{is} tight, whereas a positive answer to Question~\ref{q untight} implies that our upper bound is \emph{not} tight. A positive answer to Question~\ref{q untight} means that one can find a $k$-uniform approximate equilibrium of the game for a \emph{constant} $k$ (depending only on $\varepsilon$), which in particular implies that there exists a $poly(N)$ algorithm for computing an approximate Nash equilibrium in two-action games.


\begin{thebibliography}{99}

\bibitem{A} Althofer, I. (1994) ``On Sparse Approximations to Randomized Strategies and Convex Combinations,'' \emph{Linear Algebra and Its Applications} 199,  339–-355.

\bibitem{AS} Azrieli, Y. and Shmaya, E. (2013) ``Lipschitz Games,'' \emph{Mathematics of Operations Research}, forthcoming.

\bibitem{B} Babichenko, Y. (2013) ``Query Complexity of Approximate Nash Equilibrium,"  arXiv:1306.6686.

\bibitem{DP} Daskalakis, C. and Papadimitriou, C. H. (2009) ``On Oblivious PTAS's for Nash Equilibrium,'' \emph{Proceedings of the 41st Annual ACM Symposium on Theory of Computing}, pp. 75--84.

\bibitem{HRS}  H\'emon, S., Rougemont, M., and Santha, M. (2008) ``Approximate Nash Equilibria for Multy-player Games," \emph{Algorithmic Game Theory, Lecture Notes in Computer Science} 4997, 267--278.

\bibitem{H} Hoeffding, W. (1963) ``Probability Inequalities for Sums of Bounded Random Variables," \emph{Journal of the American Statistical Association} 58, 13--30.



\bibitem{LMM} Lipton, R. J., Markakis, E., and Mehta, A. (2003) ``Playing Large Games Using Simple Strategies,'' \emph{Proceedings of the 4th ACM Conference on Electronic Commerce}, pp. 36--41. 

\bibitem{N} Nisan, N. (2009) ``Approximate Nash," blog: Turing's Invisible Hand: Computation, Economics and Game Theory. http://agtb.wordpress.com/2009/06/07/approximate-nash/. 

\end{thebibliography}
\end{document}